\theoremstyle{plain}
\newtheorem{theorem}{Theorem}
\newtheorem{statement}[theorem]{Statement}
\theoremstyle{definition}
\newtheorem{definition}{Definition}
\theoremstyle{remark}
\newtheorem{example}{Example}
\title{Towards the theory of integrable hyperbolic equations of third order}
\author{V.E. Adler, A.B. Shabat}
\date{19 June 2012}
\begin{document}\maketitle

\noindent\hrulefill
\par\noindent {\bf Abstract.}\quad
The examples are considered of integrable hyperbolic equations of third order
with two independent variables. In particular, an equation is found which
admits as evolutionary symmetries the Krichever--Novikov equation and the
modified Landau--Lifshitz system. The problem of choice of dynamical variables
for the hyperbolic equations is discussed.
\medskip

\noindent Keywords: hyperbolic pair, evolutionary symmetry, integrable
hierarchy, B\"acklund variables, Korteweg--de Vries equation, Kaup equation,
Krichever--Novikov equation
\medskip

\noindent MSC: 35L75, 35Q53, 37K10, 37K35

\par\noindent\hrulefill
\smallskip

%-------------------------------------------------------------------------------
\section{Introduction}

An important class of nonlinear integrable equations consists of the hyperbolic
ones
\[
 u_{xy}=h(x,y,u,u_x,u_y).
\]
Historically, this is the type of equations which contains the very first
integrable examples, the Liouville and the sine-Gordon equations. The modern
concept of integrability based on the notion of the Lax pair arose first in the
study of evolutionary equations of the KdV type, but its applicability to the
hyperbolic equations was established very soon. Indeed, both classes of
equations are in close relation and existence of a hierarchy of evolutionary
symmetries serves as the most convenient test (or a definition) of
integrability of hyperbolic equations. In particular, some important
classification results were obtained within the symmetry approach
\cite{Zhiber_Shabat_1979, Zhiber_Ibragimov_Shabat_1979, Zhiber_Shabat_1984,
Zhiber_1995, Sokolov_Zhiber_1995, Zhiber_Sokolov_2001}, although the problem of
description of the integrable case is not completely solved so far (see its
review in \cite{Mikhailov_Sokolov_2009}).

The development of the theory shows, on the other hand, that in some cases the
class of equations under consideration should be extended at least to the third
order hyperbolic equations
\[
 u_{xxy}=f(x,y,u,u_x,u_y,u_{xy},u_{xx}).
\]
For instance, the hyperbolic symmetry for the KdV equation itself is of this
form. This is the class of equations which we consider in this paper. Since it
is not very well studied, hence we are not aimed to obtain any classification
result or to derive the necessary integrability conditions. We restrict ourself
by consideration of several interesting examples and discuss the problem of
choice of dynamical variables for the equation.

Let us explain briefly the content of the article. Section \ref{s:hyps}
contains the main definitions, in particular, the notion of consistent pair of
third order hyperbolic equations is introduced. These systems belong to an
intermediate class between the second and the third order equations. Its
consideration is necessary, since the presented examples demonstrate that
systems of this type appears from third order equations under parametric
degeneration.

Sections \ref{s:pot-KdV}, \ref{s:pot-Kaup} are devoted to the examples related
to the KdV and the Kaup equations. These examples are not new, since the
respective hyperbolic equations are equivalent to the Camassa--Holm and the
Degasperis--Procesi \cite{Fuchssteiner_Fokas_1981, Schiff_1998, Hone_1999, 
Ivanov_2006, Degasperis_Procesi, Degasperis_Hone_Holm} equations, up to the 
introducing potential and hodograph type transformations. However, our treatment 
contains some new features since the $y$-symmetries are considered as well. 
We also hope that it is of some methodological value providing an uniform 
approach to these examples.

Section \ref{s:B} is a continuation of section \ref{s:hyps}. Here, we analyze
the consistency condition for a pair of third order hyperbolic equations and
introduce the notion of the B\"acklund variables. This provides an alternative
and more convenient set of dynamical variables, not only for the consistent
pair, but also for a single third order equation.

The main example is considered in section \ref{s:KN}, completely in the
B\"acklund variables. It is related to the Krichever--Novikov equation and
seems to be new, since it hardly could be obtained by use of the standard set
of dynamical variables.

%-------------------------------------------------------------------------------
\section{Types of hyperbolic equations}\label{s:hyps}

As it was already said in Introduction, the second order hyperbolic equations
in the light-cone variables
\begin{equation}\label{uxy}
 u_{xy}=h(x,y,u,u_x,u_y)
\end{equation}
belong to the simplest and most well studied class of hyperbolic equations. The
notion of higher order hyperbolic equations can be introduced in many ways
which we do not discuss here, see e.g. \cite{Courant_Hilbert}. The main object
in this paper is a particular class of third order hyperbolic equations with
multiple characteristics, namely of the form
\begin{equation}\label{uxxy}
 u_{xxy}=f(x,y,u,u_x,u_y,u_{xy},u_{xx}).
\end{equation}
For short, we refer to the above types of equations just as to the second and
the third order hyperbolic equations. Another class of equations studied in the
paper consists of the systems of the following type.

\begin{definition}\label{def:pair}
The pair of third order hyperbolic equations
\begin{equation}\label{pair}
\begin{aligned}[b]
 u_{xxy}&=f(x,y,u,u_x,u_y,u_{xy},u_{xx}),\\
 u_{xyy}&=g(x,y,u,u_x,u_y,u_{xy},u_{yy})
\end{aligned}
\end{equation}
is called consistent if the identity holds
\begin{equation}\label{consistency}
 (D_y(f)-D_x(g))\Big|_{\begin{subarray}{c}
  u_{xxy}=f\\ u_{xyy}=g\end{subarray}}=0.
\end{equation}
\end{definition}

\begin{definition}
A consistent pair (\ref{pair}) is called reducible if its general solution
solves some one-parametric family of hyperbolic equations
\[
 u_{xy}=h(\alpha;x,y,u,u_x,u_y),
\]
otherwise the pair is called irreducible.
\end{definition}

\begin{figure}[t]
%%%
\def\coord#1{#1\path(35,0)(0,0)(0,35)
\multiput(0,0)(10,0){4}{\circle*{2.5}}
\multiput(0,10)(0,10){3}{\circle*{2.5}}
\matrixput(20,10)(0,10){3}(10,0){2}{\circle{2.5}}
\put(-5,-5){$u$}
\multiputlist(10,-3)(10,0)[ct]{$u_x$,$u_{xx}$}
\multiputlist(-3,10)(0,10)[r]{$u_y$,$u_{yy}$}}
%%%
\def\sq{\rule{1.8em}{1.8em}}
%%%
\begin{center}
\setlength{\unitlength}{0.18em}
\begin{picture}(160,70)(0,0)
%%%
\put(0,10){\coord{\put(0,0){\color[gray]{0.75}\sq}
\multiput(10,10)(0,10){3}{\circle{2.5}}
\put(10,13){$u_{xy}$}}}
%%%
\put(60,10){\coord{\put(0,0){\color[gray]{0.6}\sq}
{\color[gray]{0.75}\put(10,0){\sq}\put(0,10){\sq}}
\put(10,10){\circle*{2.5}}\multiput(10,20)(0,10){2}{\circle{2.5}}}
\put(8,23){$u_{xyy}$}\put(18,13){$u_{xxy}$}}
%%%
\put(120,10){\coord{\put(0,0){\color[gray]{0.75}\rule{3.6em}{1.8em}}
\multiput(10,10)(0,10){3}{\circle*{2.5}}}
\put(18,13){$u_{xxy}$}}
\end{picture}
\caption{Standard sets of dynamical variables for a hyperbolic second
order equation, for a consistent pair and for a third order equation}
\label{fig:dynvar}
\end{center}
\end{figure}

Consistent systems of type (\ref{pair}) are rather delicate generalization of
second order equations. This is clear from comparing the initial data for the
Goursat problem. The role of the Goursat data for equation (\ref{uxy}) can be
played by a pair of functions $u(x,0)=a(x)$, $u(0,y)=b(y)$ such that the
consistency condition $a(0)=b(0)$ is fulfilled. In the case of system
(\ref{pair}) just one additional value should be given, the mixed derivative in
the origin: $u_{xy}(0,0)=\mathop{\rm const}$, while in the case of equation
(\ref{uxxy}) an additional function $u_x(0,y)=c(y)$ is required. The black
disks on fig. \ref{fig:dynvar} mark the dynamical variables for different types
of equations under consideration, that is the set of derivatives in a point
$\partial^m_x\partial^n_y(u)$ which can be chosen independently. We will call
such sets the standard dynamical variables.

\begin{definition}\label{def:integrability}
An equation of any type (\ref{uxy}) or (\ref{uxxy}) or (\ref{pair}) is called
integrable if it is compatible with an infinite hierarchy of evolutionary
symmetries, that is equations of the form
\[
 u_t=f(x,y,[u]),
\]
where the right hand side depends on an arbitrarily large finite number of
dynamical variables for the hyperbolic equation under consideration.
\end{definition}

This definition of integrability is standard enough. More formal definitions of
the evolutionary symmetry and applications of this notion to the classification
of the hyperbolic equations are discussed in details in the references cited
and many other sources, so we will not stop here. We only recall two facts.
First, even the existence of just one symmetry of high order with respect to
derivatives is a very strong condition which seems to be equivalent to the
existence of the whole hierarchy (no example of nontrivial equation is known
which possesses only one higher symmetry). By this reason, we consider only few
higher symmetries in the examples, omitting the proof that there are infinitely
many. Second, in all known examples, the symmetry algebra is decomposed into
two subalgebras containing derivatives with respect to $x$ or $y$ only. Each
evolutionary symmetry is itself an integrable equation. This property is
similar for all classes of equations under consideration, however one should
bear in mind that $y$-symmetries for equations of type (\ref{uxxy}) correspond
to the coupled systems with two dependent variables, for instance, $u$ and
$v=u_x$ (see examples in the remaining sections).

The notion of the consistent pair is the only thing from the above which may
seem unusual and in order to illustrate it we conclude the section with several
examples.

First, let us discuss the question about irreducibility. It is clear that a
consistent pair can be obtained by differentiating an equation (\ref{uxy}) and
disguising the result with some identical transformation, for instance, the
following equations are consistent:
\[
 u_{xxy}=D_x(h)+A(u_{xy}-h),\quad u_{xyy}=D_y(h)+A(u_{xy}-h)
\]
where $h=h(x,y,u,u_x,u_y)$ and $A(z)$ is an arbitrary function. However,
examples of such sort are reducible and therefore uninteresting. It is not
clear at once how to construct an irreducible pair, and after several attempts
one may suspect their existence. The following example shows that irreducible
pairs exist indeed.

\begin{example}\label{ex:Schwarz-KdV}
Let us consider equations
\begin{equation}\label{Schwarz-KdV:pair}
 u_{xxy}=\frac{u_{xy}u_{xx}}{u_x}+\frac{u^2_{xy}}{2u_y}+u_y,\qquad
 u_{xyy}=\frac{u_{xy}u_{yy}}{u_y}+\frac{u^2_{xy}}{2u_x}+u_x.
\end{equation}
It can be proved directly that identity (\ref{consistency}) holds (a simple
program for such kind of computations is presented in
\hyperref[s:app]{Appendix}).

In order to determine whether this pair is reducible, let us replace the
derivatives in virtue of an equation $u_{xy}=h(x,y,u,u_x,u_y)$ and see whether
the obtained equations can hold identically with respect to the dynamical
variables $u,u_x,u_{xx},u_y,u_{yy}$. Collecting coefficients at $u_{xx}$ in the
first equation and at $u_{yy}$ in the second one brings to relations
\[
 u_xh_{u_x}=h,\quad u_yh_{u_y}=h
\]
which imply that function $h$ must be of the form $h=u_xu_yH(x,y,u)$. Then, the
first equation of the system turns into
\[
 u^2_x(H^2+2H_u)+2u_xH_x=2.
\]
Obviously, this equation cannot be satisfied by any function $H(x,y,u)$ (let
alone one-parametric family), therefore pair (\ref{Schwarz-KdV:pair}) is
irreducible.

This pair is integrable as well, being compatible with the Schwarz--KdV
equation, for both characteristic directions:
\[
 u_t=u_{xxx}-\frac{3u^2_{xx}}{2u_x},\qquad
 u_\tau=u_{yyy}-\frac{3u^2_{yy}}{2u_y}.
\]
It should be remarked that the Schwarz--KdV equation serves as the evolutionary
symmetry not only for the pair (\ref{Schwarz-KdV:pair}), but also for the
second order equation
\begin{equation}\label{Schwarz-KdV:uxy}
 u_{xy}=\frac{2uu_xu_y}{u^2+1},
\end{equation}
and also for few other hyperbolic equations, see e.g.
\cite{Meshkov_Sokolov_2011}. In general, the correspondence between
(integrable) hyperbolic and evolutionary equations is not one-to-one: a given
hyperbolic equation correspond to one at most evolutionary symmetry of a given
order, but one and the same symmetry may correspond to different hyperbolic
equations which are not point equivalent.
\qed
\end{example}

One should not think as well that the compatibility condition
(\ref{consistency}) is related somehow with the integrability in the sense of
Definition \ref{def:integrability}. We will see in section \ref{s:B} that there
are ``approximately as much'' consistent pairs as the usual hyperbolic
equations and, apparently, the integrable cases for two classes are equally
rare. In the next example we consider a family of consistent pairs which
contains an arbitrary function and is not in general integrable. This example
illustrates also the simplest type of differential substitutions, introducing
of the potential. In general, the question about the substitutions admissible
by a given equation is difficult and its consideration is beyond the scope of
this paper. In particular, we do not know an algorithm which allows to check
the irreducibility of a consistent pair not only in the sense of Definition
\ref{def:pair}, but also modulo differential substitutions. Presumably, such an
example is provided by the pair (\ref{KN:pair}) belonging to the hierarchy of
Krichever--Novikov equation which is not related via differential substitutions
to other KdV type equations \cite{Svinolupov_Sokolov_Yamilov}.

\begin{example}\label{ex:KG}
Klein--Gordon equation
\begin{equation}\label{KG}
 q_{xy}=f'(q)
\end{equation}
admits the conservation law
\[
 D_x(f(q))=D_y(\tfrac{1}{2}q^2_x)
\]
which can be used for introducing a new variable (the potential) according to
the equations
\[
 u_x=\frac{1}{2}q^2_x,\quad u_y=f(q).
\]
Solving the second equation with respect to $q$ and substituting into the first
one brings to the equation
\[
 u_{xy}=\frac{\sqrt{2u_x}}{a'(u_y)},\quad a(f(q))=q.
\]
The potential can be introduced also in another way, according to the relations
\[
 u=q_x,\quad u_y=f'(q)
\]
which bring to the equation
\[
 u_{xy}=u/a'(u_y),\quad a(f'(q))=q.
\]
Finally, both substitutions can be mixed by adding the {\em trivial}
conservation law to the above one:
\[
 D_x(f(q)+kf'(q))=D_y(\tfrac{1}{2}q^2_x+kq_{xx}).
\]
This gives rise to the substitution
\[
 u_x=\frac{1}{2}q^2_x+kq_{xx},\quad u_y=f(q)+kf'(q)
\]
and eliminating of $q$ (as before, the latter equation is assumed to be
solvable with respect to $q$) brings to the following third order equation:
\[
 u_{xxy}=\frac{1}{ka'(u_y)}
  \bigl(u_x-(ka''(u_y)+\tfrac{1}{2}a'(u_y)^2)u^2_{xy}\bigr),\quad
 a(f(q)+kf'(q))=q.
\]
However, in this case the conservation law is not exactly equivalent to the
original equation and substituting $q$ intermediately into (\ref{KG}) provides
one more third order equation
\[
 u_{xyy}=-\frac{a''(u_y)}{a'(u_y)}u_{xy}u_{yy}+\frac{f'(a(u_y))}{a'(u_y)}.
\]
The consistency of the obtained hyperbolic pair follows from its construction
and a check along the lines of the previous example shows that it is
irreducible. What about integrability property, one can prove that it occurs
exactly in the cases when original equation (\ref{KG}) is integrable, that is,
if the function $f$ is equal to $e^q$, $e^q+e^{-q}$ or $e^q+e^{-2q}$ (up to
linear changes of $q,x,y$) corresponding to the Liouville, the sine--Gordon or
the Tzitzeica equations \cite{Zhiber_Shabat_1979}.
\qed
\end{example}

The concluding example demonstrates a further extension of the classes of
equations under consideration.

\begin{example}
The system
\begin{gather*}
 4\det\begin{pmatrix}
  u_{yy} & u_{xyy} & u_{xxyy} \\
  u_y    & u_{xy}  & u_{xxy}  \\
  u      & u_x     & u_{xx}
 \end{pmatrix}=u^3,\\
 3(u_{xy}u_{xx}-u_xu_{xxy})=u_yu_{xxx}-uu_{xxxy},\\
 3(u_{xy}u_{yy}-u_yu_{xyy})=u_xu_{yyy}-uu_{xyyy}
\end{gather*}
defines a {\em consistent triple} of fourth order hyperbolic equations, that
is, the cross-derivatives are equal identically,
\[
 D_x(u_{xxyy})=D_y(u_{xxxy}),\qquad D_y(u_{xxyy})=D_x(u_{xyyy})
\]
in virtue of the system itself. Comparing with the consistent pair
(\ref{pair}), the set of dynamical variables for such a triple contains
additionally the derivatives $u_{xxy}$ and $u_{xyy}$. It can be proved that the
above system is irreducible, that is, it cannot be obtained from some
consistent pair by differentiating. However, it is related via the substitution
$v=-2(\log u)_{xy}$ to the Tzitzeica equation in algebraic form
\[
 vv_{xy}-v_xv_y=v^3-1.
\]
More precisely, this equation gives rise to the first, trilinear equation of
the system (see e.g. \cite{Bobenko_Schief_1999}), while two bilinear ones are
consequences of the conservation laws
\[
 \Bigl(\frac{v_{xx}}{v}\Bigr)_y=3v_x,\qquad
 \Bigl(\frac{v_{yy}}{v}\Bigr)_x=3v_y.
\]
Indeed, the latter relations can be integrated after the substitution:
\[
 \frac{v_{xx}}{v}=-6(\log u)_{xx}+a(x),\qquad
 \frac{v_{yy}}{v}=-6(\log u)_{yy}+b(y),
\]
moreover one can assume $a=b=0$ without loss of generality, since the function
$u$ is defined by substitution up to arbitrary factors depending on $x$ and on
$y$. Now, replacing $v$ in the left hand sides yields two last equations of the
system.
\qed
\end{example}

%-------------------------------------------------------------------------------
\section{Potential Korteweg--de Vries equation}\label{s:pot-KdV}

It is known \cite{Meshkov_Sokolov_2011} that the pot-KdV equation
\begin{equation}\label{pot-KdV}
 u_t=u_{xxx}-3u^2_x
\end{equation}
does not admit compatible second order equations (\ref{uxy}). However, it is
compatible with the following third order equation:
\begin{equation}\label{pot-KdV:uxxy}
 u_{xxy}=\frac{u^2_{xy}-c}{2u_y}+2u_xu_y.
\end{equation}
One can prove by straightforward computation that this is the general form of
equation (\ref{uxxy}) compatible with the pot-KdV, up to the transformation
$u\to u+\alpha x+\beta y$. The parameter $c$ can be scaled either into $0$ or
$1$ by scaling $y$ and we will see that properties of the equation in two cases
are quite different (in regard of the real solutions, one should distinguish
also $c=1$ and $c=-1$, but this is not important in what follows). Equation
(\ref{pot-KdV:uxxy}) is well known, although in the different variables: this
is the potential form of the associated Camassa--Holm equation
\cite{Schiff_1998,Hone_1999}.

The full algebra of evolutionary symmetries for (\ref{pot-KdV:uxxy}) joins two
hierarchies, as in the case of equations (\ref{uxy}). One of them, pot-KdV
hierarchy, contains equations with derivatives $u_x,u_{xx},u_{xxx},\dots$ in
right hand side only, while equations belonging to the other hierarchy contain
beside $u_y,u_{yy},\dots$ also the mixed derivatives
$u_x,u_{xy},u_{xyy},\dots$. The first two members of this hierarchy are the
following, in the generic case $c\ne0$:
\begin{align}
\label{pot-KdV:T2}
 u_{\tau_2}&= u_{xy}u_{yy}-u_yu_{xyy}+u^3_y,\\
\label{pot-KdV:T3}
 u_{\tau_3}&= u_{yyy}-\frac{3u^2_{yy}}{2u_y}
    +\frac{3}{2cu_y}(u_{xy}u_{yy}-u_yu_{xyy}+u^3_y)^2.
\end{align}
Differentiating these equations with respect to $x$ and replacing $u_{xxy}$,
$u_{xxyy}$ in virtue of (\ref{pot-KdV:uxxy}) gives rise to the coupled
evolutionary systems with respect to $u$ and $u_x$. The commutativity of the
corresponding flows holds without taking equations (\ref{pot-KdV:uxxy}) into
account. These systems looks rather awkward, in particular, in the case
(\ref{pot-KdV:T2}) the matrix at the leading derivatives $u_{yy},u_{xyy}$ is
not constant and not diagonal. However, a differential substitution
$(u,u_x)\to(u,v)$ exists,
\[
  v=\frac{k-u_{xy}}{ku_y}+\frac{u}{k},\quad k^2=c,
\]
which brings the equations to more compact form:
\begin{gather*}
 k^{-1}u_{\tau_2}= u_{yy}+u^2_yv_y,\qquad
 k^{-1}v_{\tau_2}= -v_{yy}+u_yv^2_y,\\
 u_{\tau_3}= u_{yyy}+3u_yv_yu_{yy}+\frac{3}{2}u^3_yv^2_y,\qquad
 v_{\tau_3}= v_{yyy}-3u_yv_yv_{yy}+\frac{3}{2}u^2_yv^3_y.
\end{gather*}
This is the potential form of the Kaup--Newell system, or the derivative
nonlinear Schr\"odinger equation \cite{Kaup_Newell_1978}. Up to our knowledge,
its relation to equation (\ref{pot-KdV:uxxy}) and, therefore, to pot-KdV
equation (\ref{pot-KdV}) was not remarked before.

Now, let us consider the case $c=0$. First of all, notice that equation
(\ref{pot-KdV:uxxy}) acquires in this (and only this) case the first integral
\[
 D_x\left(u_{xyy}-\frac{u_{yy}u_{xy}}{u_y}-u^2_y\right)=0
\]
which we rewrite in the form
\begin{equation}\label{pot-KdV:uxyy}
 u_{xyy}=\frac{u_{yy}u_{xy}}{u_y}+u^2_y+\gamma.
\end{equation}
The integration constant $\gamma$ will be assumed independent on $y$, without
loss of generality: at $c=0$, the original equation (\ref{pot-KdV:uxxy})
becomes invariant with respect to the changes $y\to\varphi(y)$ and it is
possible to set $\gamma=\mathop{\rm const}$ by use of an appropriate
transformation.

Equations (\ref{pot-KdV:uxxy}$|_{c=0}$), (\ref{pot-KdV:uxyy}) constitute a
consistent pair. It is irreducible in the sense of Definition \ref{def:pair};
this can be easily proved by a direct check along the lines of Example
\ref{ex:Schwarz-KdV}. Nevertheless, this pair is very simply related to a
hyperbolic equation, since the substitution $u_y=e^q$ lowers the order of
equation (\ref{pot-KdV:uxyy}) and brings it to the $\sinh$-Gordon equation (or
to the Liouville equation at the special value $\gamma=0$ of the first
integral)
\[
 q_{xy}=e^q+\gamma e^{-q}.
\]
This is a particular case of the substitution from Example \ref{ex:KG}. It is
well known that an evolutionary symmetry for this equation is the pot-mKdV
equation
\begin{equation}\label{pot-mKdV}
 q_\tau=q_{yyy}-\frac{1}{2}q^3_y.
\end{equation}
Returning to the variable $u$, we obtain from here the symmetry for equation
(\ref{pot-KdV:uxyy}), namely, the Schwarz--KdV equation
\begin{equation}\label{Schwarz-KdV}
 u_\tau=u_{yyy}-\frac{3u^2_{yy}}{2u_y}
\end{equation}
which we have already meet in Example \ref{ex:Schwarz-KdV}. It is directly
proved that this symmetry is compatible, in virtue of (\ref{pot-KdV:uxyy}),
also with equations (\ref{pot-KdV:uxxy}$|_{c=0}$) and (\ref{pot-KdV}).

Thus, the existence of compatible equation (\ref{pot-KdV:uxyy}) in the case
$c=0$ brings to a conversion of the hierarchy of $y$-symmetries: it simplifies
and becomes one-component. The first symmetry (\ref{pot-KdV:T3}) turns, in
virtue of (\ref{pot-KdV:uxyy}), just into the classical symmetry
$u_{\tau_2}=-\gamma u_y$. Equation (\ref{pot-KdV:T3}) (as well as the next
equations of the hierarchy) contains the parameter $c$ in the denominator, but
this does not lead to the loss of this symmetry, since the division by $c$ can
be compensated by scaling $\gamma$ in the numerator. As the result, the
fractional term in equation (\ref{pot-KdV:T3}) also becomes proportional to
$u_y$ and can be neglected. Of course, this is just a heuristic argument, since
actually we cannot make use of the relation (\ref{pot-KdV:uxyy}) until the
parameter $c$ turns into $0$. However, as we have already said, the direct
check shows that equation (\ref{Schwarz-KdV}) defines the $y$-symmetry indeed.

%-------------------------------------------------------------------------------
\section{Potential Kaup equation}\label{s:pot-Kaup}

The potential Kaup equation
\begin{equation}\label{pot-Kaup}
 u_t=u_{xxxxx}+5u_xu_{xxx}+\frac{15}{4}u^2_{xx}+\frac{5}{3}u^3_x
\end{equation}
is compatible with the hyperbolic third order equation
\begin{equation}\label{pot-Kaup:uxxy}
 u_{xxy}=\frac{3u^2_{xy}}{4u_y}-u_xu_y-c.
\end{equation}
This is the general form of such equation, up to the change $u\to u+\beta y$,
and no compatible equation of second order exists. The parameter $c$ can be
scaled into $0$ or $1$ (the sign is not important, in contrast to the KdV
equation). Equation (\ref{pot-Kaup:uxxy}) is related to the Degasperis--Procesi
equation \cite{Degasperis_Procesi, Degasperis_Hone_Holm}.

In order to write down the $y$-symmetries denote
\[
  S=\frac{u_{yyy}}{u_y}-\frac{3u^2_{yy}}{2u^2_y},\qquad
  P=u^{-3/2}_y\Bigl(u_yu_{xyy}-u_{xy}u_{yy}+\frac{2}{3}u^3_y\Bigr),
\]
then first two higher symmetries take the form, at $c\ne0$:
\begin{align}
\label{pot-Kaup:T3}
 u_{\tau_3}&= u_y\Bigl(S+\frac{3}{4c}P^2\Bigr),\\
\label{pot-Kaup:T5}
 u_{\tau_5}&= u_y\Bigl(S_{yy}+\frac{3}{2}S^2
    +\frac{5}{4c}(2SP^2+P^2_y+2PP_{yy})+\frac{15}{16c^2}P^4\Bigr).
\end{align}
There equations can be written as two-component evolutionary systems with
respect to $u$ and $u_x$, but their form is rather bulky.

The value $c=0$ is distinguished, since the first integral appears in this
case: $D_x(P)=0$. This yields equation
\begin{equation}\label{pot-Kaup:uxyy}
 u_{xyy}=\frac{u_{yy}u_{xy}}{u_y}-\frac{2}{3}u^2_y+\gamma u^{1/2}_y
\end{equation}
which constitutes together with (\ref{pot-Kaup:uxxy}) a consistent pair. One
can assume without loss of generality, due to the changes $y\to\varphi(y)$,
that $\gamma$ does not depend on $y$.

At $c=0$, the hierarchy of $y$-symmetries becomes one-component, but, in
contrast to the previous section, its structure depends on the value of the
first integral. At the special value $\gamma=0$, the flow (\ref{pot-Kaup:T3})
survives and turns into the Schwarz--KdV equation (\ref{Schwarz-KdV}). If
$\gamma\ne0$ then the third order symmetry does not exist, while the flow
(\ref{pot-Kaup:T3}) turns into equation
\[
 u_{\tau_5}=u_{yyyyy}-5\frac{u_{yy}u_{yyyy}}{u_y}-\frac{15u^2_{yyy}}{4u_y}
  +\frac{65u^2_{yy}u_{yyy}}{4u^2_y}-\frac{135u^4_{yy}}{16u^3_y}
\]
which is compatible, in virtue of (\ref{pot-Kaup:uxyy}), with equations
(\ref{pot-Kaup:uxxy}$|_{c=0}$) and (\ref{pot-Kaup}). The dependence of the
answer on $\gamma$ becomes clear after the the substitution $u_y=e^q$ which
turns equation (\ref{pot-Kaup:uxyy}) into the Tzitzeica equation
\[
 q_{xy}=-\frac{2}{3}e^q+\gamma e^{-q/2}.
\]
If $\gamma=0$ then we obtain the Liouville equation again, with the symmetry
(\ref{pot-mKdV}).

%-------------------------------------------------------------------------------
\section{Passage to the B\"acklund variables}\label{s:B}

Till now, we used the standard dynamical variables, that is the set of
derivatives with respect to $x,y$ which cannot be eliminated in virtue of the
equation (see fig. \ref{fig:dynvar}). At a first glance, this set is the only
reasonable one. However, it turns out to be unfit in more complicated examples
like the Krichever--Novikov equation from the next section, bringing to
catastrophic computations and answers. The key to their simplification is given
by the problem of description of the consistent pairs of equations
(\ref{pair}). We restrict ourselves by the case of equations which are linear
with respect to the derivatives $u_{xx},u_{yy}$. The analysis below
demonstrates that such pairs can be conveniently represented by equations with
two independent variables $u,v$. In this analysis, the pair is not assumed to
be integrable in the sense of existence of evolutionary symmetries.

So, let us consider the pair of equations
\begin{equation}\label{pair'}
 u_{xxy}=au_{xx}+b,\qquad u_{xyy}=cu_{yy}+d
\end{equation}
where $a,b,c,d$ are functions on $x,y,u,u_x,u_y,u_{xy}$. The consistency
condition is
\begin{equation}\label{consistency'}
 D_y(au_{xx}+b)=D_x(cu_{yy}+d)
\end{equation}
and after replacing $u_{xxy}$ and $u_{xyy}$ in virtue of (\ref{pair'}) this
equation must hold identically on $x,y,u$ and derivatives of $u$.

\begin{theorem}
If system (\ref{pair'}) is consistent and irreducible then it is of the form
\begin{equation}\label{pair''}
\begin{aligned}[b]
 u_{xxy}&=\frac{1}{h_{u_{xy}}}\bigl(F(x,y,u,u_x,h)
   -h_x-h_uu_x-h_{u_x}u_{xx}-h_{u_y}u_{xy}\bigr),\\
 u_{xyy}&=\frac{1}{h_{u_{xy}}}\bigl(G(x,y,u,u_y,h)
   -h_y-h_uu_y-h_{u_x}u_{xy}-h_{u_y}u_{yy}\bigr)
\end{aligned}
\end{equation}
where function $h=h(x,y,u,u_x,u_y,u_{xy})$ is implicitly defined by equation
\begin{equation}\label{fgh}
 F_hG-FG_h+u_{xy}(F_{u_x}-G_{u_y})+u_yF_u-u_xG_u+F_y-G_x=0.
\end{equation}
\end{theorem}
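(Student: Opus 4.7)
The plan is to analyze the four coefficients (of $u_{xx}u_{yy}$, of $u_{xx}$, of $u_{yy}$, and the constant term) of the polynomial identity in $u_{xx},u_{yy}$ that the consistency relation (\ref{consistency'}) becomes after substitution, and to repackage them as the existence of a B\"acklund variable $h$ via a Frobenius argument on the fiber.

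To this end I would introduce the vector fields $X_1=\partial_{u_x}+a\,\partial_{u_{xy}}$ and $X_2=\partial_{u_y}+c\,\partial_{u_{xy}}$ acting in the three-dimensional fiber $(u_x,u_y,u_{xy})$ over $(x,y,u)$. Expanding the consistency relation identifies the coefficient of $u_{xx}u_{yy}$ with exactly the commutator identity $[X_1,X_2]=0$. Since $X_1,X_2$ are linearly independent and commute, Frobenius provides a function $h(x,y,u,u_x,u_y,u_{xy})$ with $X_1h=X_2h=0$; the restriction of $dh$ to the fiber is proportional to $(-a,-c,1)$, so $h_{u_{xy}}\ne 0$ automatically. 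The relations $h_{u_x}=-ah_{u_{xy}}$ and $h_{u_y}=-ch_{u_{xy}}$ then rewrite $a$ and $c$ in the form required by (\ref{pair''}), the residual freedom $h\mapsto\phi(x,y,u,h)$ being the usual reparametrization of the B\"acklund variable; irreducibility excludes the degenerate branch in which $h$ depends only on $(x,y,u,u_x,u_y)$ (which would yield $F=G=0$ and reduce the pair to a one-parameter family of second-order equations).

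Next I would set $F:=h_x+h_uu_x+h_{u_y}u_{xy}+h_{u_{xy}}b$ and $G:=h_y+h_uu_y+h_{u_x}u_{xy}+h_{u_{xy}}d$. A brief calculation using $X_1h=0$ and $X_2h=0$ shows that after substitution $D_xh=F$ and $D_yh=G$, which, solved for $u_{xxy}$ and $u_{xyy}$, is precisely (\ref{pair''}). The substantive point is that $F$ must depend on $(x,y,u,u_x,h)$ only, equivalently $X_2F=0$. This is the main technical obstacle: expanding $X_2F=F_{u_y}+cF_{u_{xy}}$ produces six mixed second derivatives of $h$, which are eliminated one by one by differentiating $h_{u_y}+ch_{u_{xy}}=0$ with respect to $x$, $u$, $u_y$, and $u_{xy}$. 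After these substitutions the whole expression collapses to $h_{u_{xy}}\bigl[(b_{u_y}+cb_{u_{xy}})-(c_x+u_xc_u+u_{xy}c_{u_y}+bc_{u_{xy}}+c^2)\bigr]$, and the bracket vanishes by the $u_{yy}$-coefficient consistency condition. The mirror calculation yields $X_1G=0$ from the $u_{xx}$-coefficient condition, so in the B\"acklund coordinates $(x,y,u,u_x,u_y,h)$ one has $G=G(x,y,u,u_y,h)$.

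Finally, with the pair equivalent to $D_xh=F(x,y,u,u_x,h)$ and $D_yh=G(x,y,u,u_y,h)$, the chain rule in the new coordinates gives $D_yF=F_y+F_uu_y+F_{u_x}u_{xy}+F_hG$ and $D_xG=G_x+G_uu_x+G_{u_y}u_{xy}+G_hF$. The remaining (constant-term) consistency condition is then equivalent to $D_yF=D_xG$, which on rearrangement is exactly (\ref{fgh}); the formal identity $D_yD_xh=D_xD_yh$ confirms that no further conditions arise, so (\ref{fgh}) captures all the consistency content not already used in the construction of $h$.
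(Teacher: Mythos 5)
Your construction of $h$ and of the representation (\ref{pair''}) is sound and follows essentially the same route as the paper: the $u_{xx}u_{yy}$-coefficient of (\ref{consistency'}) is the integrability condition for $h_{u_x}=-ah_{u_{xy}}$, $h_{u_y}=-ch_{u_{xy}}$ (your commuting fields $X_1,X_2$ are just a Frobenius phrasing of this), the $u_{xx}$- and $u_{yy}$-coefficients give $X_1G=X_2F=0$ so that $F=F(x,y,u,u_x,h)$ and $G=G(x,y,u,u_y,h)$, and the residual condition $D_yF=D_xG$ is (\ref{fgh}). Your explicit check that $X_2F$ collapses to $h_{u_{xy}}$ times the $u_{yy}$-coefficient condition is correct, and is in fact more detailed than the paper, which only records the resulting relations.

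The gap is in the last clause of the theorem: you never prove that (\ref{fgh}) actually \emph{defines} $h$ implicitly, i.e.\ that it is (locally) solvable with respect to $h$, and this is precisely the step for which the irreducibility hypothesis is needed. Your only appeal to irreducibility --- excluding ``the branch in which $h$ depends only on $(x,y,u,u_x,u_y)$'' --- is vacuous: as you yourself observe, the Frobenius construction forces $h_{u_{xy}}\ne0$, so that branch does not exist, and in any case it is not what reducibility means in Definition \ref{def:pair}. The degeneration that must be excluded is the one in which (\ref{fgh}) holds identically in $h$ for the given $F,G$. The paper's argument runs: if (\ref{fgh}) is an identity in $h$, the coefficient of $u_{xy}$ must vanish, $F_{u_x}=G_{u_y}$, which (since $F$ is independent of $u_y$ and $G$ of $u_x$) forces $F$ linear in $u_x$ and $G$ linear in $u_y$; feeding this back into (\ref{fgh}) yields $F=-(s_uu_x+s_x)/s_h$, $G=-(s_uu_y+s_y)/s_h$ for some $s(x,y,u,h)$, whence $D_x(s)=D_y(s)=0$ and the pair is obtained by differentiating the one-parameter family of second-order equations $s(x,y,u,h(x,y,u,u_x,u_y,u_{xy}))=\alpha$, i.e.\ it is reducible. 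Without this contrapositive your proof establishes only that some $h$ satisfying (\ref{fgh}) exists, not that (\ref{fgh}) determines it; you should add this final paragraph.
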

\begin{proof}
Collecting terms with $u_{xx}u_{yy}$ in equation (\ref{consistency'}) yields
the relation
\[
 a_{u_y}-c_{u_x}=ac_{u_{xy}}-ca_{u_{xy}}
\]
which serves as the compatibility condition for the system of equations
\[
 h_{u_x}=-ah_{u_{xy}},\quad h_{u_y}=-ch_{u_{xy}}
\]
with respect to an unknown function $h$. Therefore, if the pair is consistent
then a function $h(x,y,u,u_x,u_y,u_{xy})$ exists such that it is represented as
follows:
\[
 u_{xxy}=f_0-\frac{h_{u_x}}{h_{u_{xy}}}u_{xx},\quad
 u_{xyy}=g_0-\frac{h_{u_y}}{h_{u_{xy}}}u_{yy}.
\]
It is convenient to redefine $f_0,g_0$ and to rewrite this in the form
\[
 D_x(h)=\tilde f(x,y,u,u_x,u_y,u_{xy}),\qquad
 D_y(h)=\tilde g(x,y,u,u_x,u_y,u_{xy}).
\]
Now, collecting the terms with $u_{xx}$ and $u_{yy}$ in equation
(\ref{consistency'}) yields
\[
 \tilde f_{u_y}h_{u_{xy}}=\tilde f_{u_{xy}}h_{u_y},\qquad
 \tilde g_{u_x}h_{u_{xy}}=\tilde g_{u_{xy}}h_{u_x}.
\]
Solving these equation proves that functions $\tilde f,\tilde g$ are of the
form
\[
 \tilde f=F(x,y,u,u_x,h),\qquad \tilde g=G(x,y,u,u_y,h),
\]
that is the pair (\ref{pair'}) is of the form (\ref{pair''}). Moreover, the
consistency condition now takes the form (\ref{fgh}).

Now let us prove that if the pair is irreducible then equation (\ref{fgh}) must
be (locally) solvable with respect to $h$. Assume that this is not the case,
that is, the functions $F$ and $G$ are such that equation (\ref{fgh}) holds
identically on $h$. Then the coefficient at $u_{xy}$ must vanish:
$F_{u_x}-G_{u_y}=0$, because $F$ and $G$ do not depend on $u_{xy}$
intermediately. This implies that $F$ is linear with respect to $u_x$, $G$ is
linear with respect to $u_y$ and further analysis of equation (\ref{fgh}) shows
easily that these functions are of the form
\[
 F=-\frac{s_uu_x+s_x}{s_h},\quad G=-\frac{s_uu_y+s_y}{s_h},\quad s=s(x,y,u,h).
\]
However, in such a case the pair (\ref{pair''}) is reducible because its
equations are obtained by differentiating of the one-parametric family of
second order equations $s(x,y,u,h(x,y,u,u_x,u_y,u_{xy}))=\alpha$.
\end{proof}

Equation (\ref{fgh}) can be effectively solved with respect to $h$ only for
very special functions $F,G$, for example, linear or quadratic with respect to
$h$. As a particular example, if we chose
\[
 F=\frac{(u-h)^2}{u_x},\quad G=\frac{(u-h)^2}{u_y}
\]
then equation (\ref{fgh}) is linear with respect to $h$ and we find
\[
 h=u-2\frac{u_xu_y}{u_{xy}}.
\]
Now, substitution into equations (\ref{pair''}) brings to pair
(\ref{Schwarz-KdV:pair}) from Example \ref{ex:Schwarz-KdV}.

If functions $F,G$ are of more general form then representation (\ref{pair''})
is practically useless. However, it brings to the idea that the ``proper''
dynamical variable is not the mixed derivative $u_{xy}$, but the function $h$
itself. Indeed, if we introduce the new variable $v=h(x,y,u,u_x,u_y,u_{xy})$
then $u_{xy}$ is defined explicitly as the inverse function of $v$, since it
enters into (\ref{fgh}) linearly. It is easy to see that the result of this
transformation is that equations (\ref{pair''}) are written in the form
\begin{equation}\label{vxvy}
 v_x=F(x,y,u,u_x,v),\qquad v_y=G(x,y,u,u_y,v),
\end{equation}
while equation (\ref{fgh}) follows from here after eliminating of the cross
derivatives and provides an equation for $u_{xy}$:
\begin{equation}\label{H}
 u_{xy}=H(x,y,u,u_x,u_y,v).
\end{equation}
These equations give the desired representation of the consistent pair. In
order to return to the standard set of dynamical variables, one has to
differentiate (\ref{H}) with respect to $x$ and $y$ and to eliminate $v$ from
the obtained equations, using equations (\ref{vxvy}), (\ref{H}) again.

Notice that the roles of the variables $u$ and $v$ in system (\ref{vxvy}) are
completely equal (for generic functions $F,G$) and eliminating of $u$ instead
of $v$ yields a consistent pair of hyperbolic equations with respect to the
variable $v$. Exactly the same idea is used in the definition of B\"acklund
transformation (see e.g. \cite{Ablowitz_Segur_1981}), this is why we call $u,v$
the B\"acklund variables.

\begin{definition}\label{def:Bpair}
We say that equations (\ref{vxvy}) define the representation of the pair
(\ref{pair'}) in the B\"acklund variables if equations (\ref{pair'}) follows
from (\ref{vxvy}) as a result of eliminating of the variable $v$.
\end{definition}

Let us stress that pair (\ref{pair'}) which admits a representation
(\ref{vxvy}) is automatically consistent.

In the above proof, we have seen that equation (\ref{fgh}) may lose the
dependence on $h$ or $u_{xy}$ under some special choices of $F,G$. Such
functions are unfit for the definition of pair (\ref{pair''}). However, one can
waive these restrictions and consider systems (\ref{vxvy}) with arbitrary $F$
and $G$ as a basic object. From this point of view, degenerations of different
types are admissible. In particular, equations (\ref{vxvy}) may define indeed a
B\"acklund transformation between hyperbolic equations of second order. This
correspond to the situation when equation (\ref{H}) does not contain $v$ and
analogous equation for $v_{xy}$ does not contain $u$. Thus, the class of
systems (\ref{vxvy}) is rather general and significant. Functions $F,G$ depend
on 5 arguments, that is the functional dimension of this class is the same as
for the class of second order equations (\ref{uxy}).

It should be remarked that representation in the B\"acklund variables is not
unique, in contrast to the representation in the standard dynamical variables.
Indeed, variable $v$ can be replaced by any variable of the form
\begin{equation}\label{newv}
 \tilde v=\varphi(x,y,u,v),\quad \varphi_v\ne0
\end{equation}
without changing the general form of equations (\ref{vxvy}). Certainly,
eliminating $\tilde v$ results in the same equations for $u$ as before. This
arbitrariness should be taken into account when bringing a given system
(\ref{vxvy}) to a simpler form.

A remarkable feature is that B\"acklund variables may turn convenient even in
consideration of a single third order equation rather than a pair
(\ref{pair'}). In such a case, one equation of system (\ref{vxvy}) is replaced
with an equation of the form (\ref{H}). As an example, let us rewrite in the
B\"acklund variables some equations corresponding to the pot-KdV equation
(\ref{pot-KdV}). The consistent pair (\ref{pot-KdV:uxxy}$|_{c=0}$),
(\ref{pot-KdV:uxyy}) can be represented as the system
\[
 v_x+u_x=\frac{1}{2}(u-v)^2,\qquad v_yu_y=-\gamma.
\]
Notice that its first equation defines the $x$-part of the B\"acklund
transformation (with zero spectral parameter) for equation (\ref{pot-KdV}). At
$c\ne0$, the third order equation (\ref{pot-KdV:uxxy}) is equivalent to the
system
\[
 v_x+u_x=\frac{1}{2}(u-v)^2,\qquad u_{xy}=u_y(u-v)+k,\qquad k^2=c.
\]
This result is in a close relation with the representation of the associated
Camassa--Holm equation by compatible differential-difference equations, the
dressing chain and a Volterra-type lattice
\cite{Shabat_Yamilov_1991,Adler_Shabat_2006}.

Analogously, the consistent pair (\ref{pot-Kaup:uxxy}$|_{c=0}$),
(\ref{pot-Kaup:uxyy}) corresponding to the Kaup equation (\ref{pot-Kaup}) can
be represented as the system
\[
 v_x=\frac{1}{4}(u+v)^2,\qquad v_y=-\frac{1}{3}u_y-\gamma u^{-1/2}_y,
\]
while single equation (\ref{pot-Kaup:uxxy}) is equivalent to the system
\[
 v_x=\frac{1}{4}(u+v)^2,\qquad u_{xy}=-u_y(u+v)+ku^{1/2}_y,\qquad k^2=4c.
\]
A more complicated example of using the B\"acklund variables is presented in
the next section.

%-------------------------------------------------------------------------------
\section{Krichever--Novikov equation}\label{s:KN}

It is known \cite{Meshkov_Sokolov_2011} that the Krichever--Novikov equation
\begin{equation}\label{KN}
 u_t=u_{xxx}-\frac{3(u^2_{xx}-r(u))}{2u_x},\quad r^{(5)}(u)=0
\end{equation}
does not admit a consistent second order hyperbolic equation (\ref{uxy}). In
this example, the search of compatible third order equation using the standard
dynamical variables runs into inextricable computational difficulties. The
computations in the B\"acklund variables bring to the following answer.

\begin{theorem}
Let $h=h(u,v)$ be a biquadratic polynomial and $r(u)$, $s(v)$ be its
discriminants with respect to $v$, $u$, respectively:
\[
 h_{uuu}=h_{vvv}=0,\quad r(u)=h^2_v-2hh_{vv},\quad s(v)=h^2_u-2hh_{uu}.
\]
Then: 1) equation (\ref{KN}) defines the evolutionary symmetry for the
hyperbolic system (with arbitrary parameter $c$)
\begin{gather}
\label{KN:uxy}
 u_{xy}=\frac{u_x}{h}(h_uu_y+c(hh_{uv}-h_uh_v)),\\
\label{KN:uxvx}
 u_xv_x=h(u,v);
\end{gather}
2) the following equation holds in virtue of equations (\ref{KN}),
(\ref{KN:uxy}), (\ref{KN:uxvx}):
\begin{equation}\label{KN:vt}
 v_t=v_{xxx}-\frac{3(v^2_{xx}-s(v))}{2v_x},\quad s^{(5)}(v)=0;
\end{equation}
3) the variables $u$ and $v$ are on equal footing: equation (\ref{KN:uxy}) can
be replaced with
\begin{equation}\label{KN:vxy}
 v_{xy}=\frac{v_x}{h}(h_vv_y-c(hh_{uv}-h_uh_v)).
\end{equation}
\end{theorem}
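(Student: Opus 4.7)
The plan is to exploit the symmetry of the B\"acklund formulation: equation (\ref{KN:uxvx}) is invariant under swapping $u\leftrightarrow v$, and part 3 shows that (\ref{KN:uxy}) and (\ref{KN:vxy}) are related by the same swap together with $c\mapsto -c$. I would prove part 3 first, then part 2, and deduce part 1 from these by symmetry and a direct identity check. For part 3, differentiate (\ref{KN:uxvx}) with respect to $y$ to get $u_{xy}v_x+u_xv_{xy}=h_uu_y+h_vv_y$, substitute the right-hand side of (\ref{KN:uxy}) for $u_{xy}$, and use $v_x=h/u_x$. The term $h_uu_y$ cancels, leaving $u_xv_{xy}=h_vv_y-c(hh_{uv}-h_uh_v)$, which is exactly (\ref{KN:vxy}) after multiplying by $v_x/h$.

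For part 2, I would repeatedly differentiate $u_xv_x=h$ in $x$ to express $v_x$, $v_{xx}$, $v_{xxx}$ as explicit functions of the $u$-jet and $v$; for instance $v_x=h/u_x$ and $v_{xx}=h_u+(h_v-u_{xx})h/u_x^2$. Differentiating the B\"acklund relation in $t$ and substituting $u_t=u_{xxx}-\tfrac{3}{2}(u_{xx}^2-r(u))/u_x$ yields a linear relation that determines $v_t$ as a function of the $u$-jet and $v$. The substantive step is then to verify that this expression coincides with $v_{xxx}-\tfrac{3}{2}(v_{xx}^2-s(v))/v_x$ once the $v$-derivatives are rewritten via the formulas above. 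The decisive algebraic input is that $h$ is biquadratic, so $r(u)=h_v^2-2hh_{vv}$ and $s(v)=h_u^2-2hh_{uu}$ really are the discriminants of $h$ with respect to $v$ and $u$; the intertwining identity $(u_{xx}-h_v)/u_x=-(v_{xx}-h_u)/v_x$, which follows directly from $D_x(u_xv_x)=D_xh$, together with its differential consequences involving $v_{xxx}$, produces the required cancellations.

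For part 1, one verifies $D_t(u_{xy})=D_y(u_t)$ modulo the system. Using parts 2 and 3, the $t$-flows for $u$ and $v$ are both KN-type and the three equations (\ref{KN:uxy}), (\ref{KN:vxy}), (\ref{KN:uxvx}) are equivalent on the system, so one may compute $D_t(u_{xy})$ by differentiating the right-hand side of (\ref{KN:uxy}) in $t$, using the KN flows for $u_t$ and $v_t$ and eliminating $v$-jets via (\ref{KN:uxvx}). Comparing with $D_y$ of the KN expression, expanded using (\ref{KN:uxy}) and its $x$-derivatives, reduces the claim to an identity in the $x$-jets of $u$ and $v$. The main obstacle throughout is the sheer algebraic bulk: the discriminant identities for the biquadratic $h$ together with the presence of the parameter $c$ (which must appear in intermediate terms but drop out of the final $t$-flow, reflecting its role as a spectral parameter of the B\"acklund transformation) make the systematic verification by the computer-algebra scheme of the Appendix essentially mandatory.
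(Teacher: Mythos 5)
Your overall strategy---direct verification by cross-differentiation, with part 3 obtained by $y$-differentiating (\ref{KN:uxvx}) and substituting (\ref{KN:uxy})---coincides with the paper's, and your part 3 is literally the paper's computation. The gap is in part 2. Differentiating the B\"acklund relation $u_xv_x=h$ with respect to $t$ gives
\[
 (v_xD_x-h_u)(u_t)+(u_xD_x-h_v)(v_t)=0,
\]
which is a first-order \emph{differential} relation in $v_t$ (it contains $D_x(v_t)$), not an algebraic one; it does not ``determine $v_t$ as a function of the $u$-jet and $v$.'' This is exactly the classical situation of the $x$-part of a B\"acklund transformation, where the $t$-evolution of $v$ must be postulated and then checked---and the paper performs precisely this check, but as the \emph{second}, easier computation, noting explicitly that only (\ref{KN:uxvx}) is used in it and that it merely reproduces the known B\"acklund result. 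Carried out as you describe, this step verifies that (\ref{KN:vt}) is \emph{consistent} with the system; it does not establish that (\ref{KN:vt}) \emph{holds in virtue of} the system, which is what statement 2 asserts.

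To actually derive (\ref{KN:vt}) one must use the $y$-equation (\ref{KN:uxy}), in which $v$ enters algebraically: applying $D_t$ to it, replacing $D_t(u_{xy})$ by $D_xD_y(u_t)$ and eliminating derivatives via the system produces a relation in which $v_t$ appears with the purely algebraic coefficient $u_x\bigl((h_u/h)_v\,u_y+c(\hat h/h)_v\bigr)$, $\hat h=hh_{uv}-h_uh_v$; solving for $v_t$ and simplifying yields (\ref{KN:vt}), the nontrivial point (which the paper flags) being that the numerator is divisible by this coefficient. This is also where the uniqueness behind statement 2 lives: $v$ is a local (algebraic) function of $u,u_x,u_y,u_{xy}$ only because of (\ref{KN:uxy}), whereas from (\ref{KN:uxvx}) alone it is nonlocal and $v_t$ could not be a consequence. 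So you should reorder: derive $v_t$ from the $t$-derivative of (\ref{KN:uxy}) (this proves part 2 and half of part 1), then check the $t$-derivative of (\ref{KN:uxvx}) as an identity. Your reliance on computer algebra for the bulk is in line with the paper; note also the slip $D_t(u_{xy})=D_y(u_t)$, which should read $D_t(u_{xy})=D_xD_y(u_t)$.
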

\begin{proof}
In order to prove statement 1) one should verify that differentiating system
(\ref{KN:uxy}), (\ref{KN:uxvx}) with respect to $t$ in virtue of (\ref{KN})
gives rise to an identity modulo the system. First, differentiating of equation
(\ref{KN:uxy}) yields the equality
\begin{multline*}
 \biggl(D_xD_y -\frac{h_uu_y+c\hat h}{h}D_x -\frac{h_u}{h}u_xD_y
   -\Bigl(\frac{h_u}{h}\Bigr)_uu_xu_y \\
   -c\Bigl(\frac{\hat h}{h}\Bigr)_uu_x\biggr)
 \left(u_{xxx}-\frac{3(u^2_{xx}-r(u))}{2u_x}\right)
  = u_x\biggl(\Bigl(\frac{h_u}{h}\Bigr)_vu_y
     +c\Bigl(\frac{\hat h}{h}\Bigr)_v\biggr)v_t
\end{multline*}
where we denote $\hat h=hh_{uv}-h_uh_v$, and an explicit expression for $v_t$
is obtained from here. It is simplified after replacing the derivatives of $u$
in virtue of (\ref{KN:uxy}), (\ref{KN:uxvx}), and equation (\ref{KN:vt})
appears as a result of straightforward, although rather tedious computations (a
remarkable circumstance here is that the left hand side of the equation is
divisible by the expression in the brackets from the right hand side).

Thus, statement 2) is proved and in order to complete the proof of statement
1), we have to check that differentiating equation (\ref{KN:uxvx}) with respect
to $t$ yields an identity, that is
\begin{multline*}\qquad
  (v_xD_x-h_u)\left(u_{xxx}-\frac{3(u^2_{xx}-r(u))}{2u_x}\right) \\
 +(u_xD_x-h_v)\left(v_{xxx}-\frac{3(v^2_{xx}-s(v))}{2v_x}\right)=0.\qquad
\end{multline*}
This is proved by a direct and relatively simple computation. Moreover, since
the original equation (\ref{KN}) and the obtained equation (\ref{KN:vt}) do not
contain the derivatives with respect to $y$, hence equation (\ref{KN:uxy}) is
not actually needed in this computation and only relation (\ref{KN:uxvx}) is
used. This is exactly equivalent to the known result \cite{Adler_1998} that
this relation defines the $x$-part of the B\"acklund transformation between
equations (\ref{KN}) and (\ref{KN:vt}).

Statement 3) is very simple:
\begin{align*}
 v_{xy}&=\left(\frac{h}{u_x}\right)_y=\frac{h_uu_y+h_vv_y}{u_x}
    -\frac{hu_{xy}}{u^2_x}\\
 &= \frac{h_uu_y}{u_x}+\frac{h_vv_yv_x}{h}
    -\frac{1}{u_x}(h_uu_y+c(hh_{uv}-h_uh_v))\\
 &= \frac{v_x}{h}(h_vv_y-c(hh_{uv}-h_uh_v)).
\end{align*}
It is clear that, vice versa, (\ref{KN:uxy}) follows from (\ref{KN:vxy}).
\end{proof}

Notice that equations (\ref{KN})--(\ref{KN:vxy}) keep the form invariant under
the M\"obius transformations of $u$ and $v$. The orbits of the group of
transformations depend, in particular, on the multiplicity of the zeroes of the
polynomials $r,s$ (the proper Krichever--Novikov equation corresponds to the
generic case of simple zeroes) and the detailed classification of the orbits is
contained in \cite{Adler_Bobenko_Suris_2009}.

Thus, hyperbolic system (\ref{KN:uxy}), (\ref{KN:uxvx}) defines a certain
extension of the B\"acklund transformation for the Krichever--Novikov equation
(one can call it $y$-part). An important feature is that in the case of usual
B\"acklund transformation the variables $u$ and $v$ cannot be explicitly
expressed one from another, while adding the new independent variable $y$ makes
the transformation explicit. Indeed, variable $v$ is expressed through
$u,u_x,u_y$ and $u_{xy}$ as a solution of equation (\ref{KN:uxy}), and the
inverse transformation is obtained from (\ref{KN:vxy}). In particular, this is
why equation (\ref{KN:vt}) is uniquely derived (\ref{KN})--(\ref{KN:uxvx}) and
should not be postulated in advance.

The polynomial $\hat h=hh_{uv}-h_uh_v$ in the right hand side of equation
(\ref{KN:uxy}) is biquadratic as well (moreover, an algebraic identity holds
$\hat h\hat h_{uv}-\hat h_u\hat h_v=\mathop{\rm const}h$). Therefore, in order
to find $v$, one has to solve a quadratic equation with rather cumbersome
coefficients. In principle, after substituting the obtained expression into
(\ref{KN:uxvx}), one can write down a hyperbolic third order equation in the
standard dynamical variables (that is, of form (\ref{uxxy})), but it is so
bulky that the inapplicability of these variables in this example becomes
obvious. Moreover, the form of this equation depends on the particular choice
of $h$. In the most degenerate case $r=0$, $h=\frac{1}{2}(u-v)^2$ corresponding
to the Schwarz--KdV equation an essential simplifications occur and one obtains
the equation (cf with pair (\ref{Schwarz-KdV:pair}) and equation
(\ref{pot-KdV:uxyy}))
\begin{equation}\label{Schwarz-KdV:uxxy}
 u_{xxy}=\frac{u_{xy}u_{xx}}{u_x}+\frac{u^2_{xy}-c^2u^2_x}{2u_y}+u_y.
\end{equation}

It is natural to use the variables $u,v$ also for representing of
$y$-symmetries for system (\ref{KN:uxy}), (\ref{KN:uxvx}) (recall that, in the
standard dynamical set, we used $u_x$ instead of $v$). The two simplest higher
flows presented in the following statement were found by an intermediate
computation. These flows are defined just by one equation for $u_\tau$, since
equation for $v_\tau$ is derived automatically. However, for the sake of
completeness, we write down both equations of the coupled system.

\begin{statement}
If $c\ne0$ then system (\ref{KN:uxy}), (\ref{KN:uxvx}) admits the following
symmetries:
\begin{equation}\label{KN:uT2}
\begin{aligned}[b]
  u_{\tau_2}&=u_{yy}-\frac{1}{ch}(u^2_y-c^2r(u))(v_y+ch_u)-\frac{c^2}{2}r'(u),\\
 -v_{\tau_3}&=v_{yy}+\frac{1}{ch}(v^2_y-c^2s(v))(u_y-ch_v)-\frac{c^2}{2}s'(v);
\end{aligned}
\end{equation}
\begin{equation}\label{KN:uT3}
\begin{aligned}[b]
 u_{\tau_3}&=u_{yyy}-\frac{3u_yu_{yy}}{ch}(v_y+ch_u)
      +\frac{3u_y}{2c^2h^2}(u^2_y-c^2r(u))(v_y+ch_u)^2\\
 &\qquad +\frac{3c}{2h}u_yv_yr'(u)+\frac{3c^2}{2h^2}(h^2_ur(u)-\hat h^2)u_y,\\
 v_{\tau_3}&=v_{yyy}+\frac{3v_yv_{yy}}{ch}(u_y-ch_v)
      +\frac{3v_y}{2c^2h^2}(v^2_y-c^2s(v))(u_y-ch_v)^2\\
 &\qquad -\frac{3c}{2h}u_yv_ys'(v)+\frac{3c^2}{2h^2}(h^2_vs(v)-\hat h^2)v_y.
\end{aligned}
\end{equation}
\end{statement}

These are equations from the modified Landau--Lifshitz hierarchy, written under
the stereographic projection \cite{Shabat_Yamilov_1991}.

Like in examples from sections \ref{s:pot-KdV}, \ref{s:pot-Kaup}, the parameter
$c$ can be scaled to $0$ or $1$ and the properties of equations are different
in the two cases. It is easy to prove that if $c=0$ then system (\ref{KN:uxy}),
(\ref{KN:uxvx}) admits the first integral
\[
  D_x\left(\frac{u_yv_y}{h}\right)=0.
\]
Up to the changes $y\to\varphi(y)$, the value of the integration constant can
be chosen equal to $1$, and this brings to the following hyperbolic pair
written in the B\"acklund variables (see Definition \ref{def:Bpair}).

\begin{statement}
The system
\begin{equation}\label{KN:pair}
 u_xv_x=h(u,v),\quad u_yv_y=h(u,v)
\end{equation}
admits the Krichever--Novikov equation both as $x$- and $y$-symmetry, that is,
it is compatible with equation (\ref{KN}) and equation
\[
 u_\tau=u_{yyy}-\frac{3(u^2_{yy}-r(u))}{2u_y}.
\]
\end{statement}

The consistency of (\ref{KN}) and (\ref{KN:pair}) follows from the construction
and the formula for $y$-symmetry is obvious since the independent variables $x$
and $y$ are now on the equal footing. Of course, the statement can also be
proven directly. However, the passage to the limit from system (\ref{KN:uT3})
is more complicated in this example comparing to the cases of the KdV and the
Kaup equations. Indeed, the right hand side of (\ref{KN:uT3}) even does not
contain the terms with $u^2_{yy}$. It is clear that this paradox is explained
by the fact that the second equation (\ref{KN:pair}) is not valid until $c$
turns into $0$ and it should be replaced with a certain formal power series
with respect to $c$, but we will not dive into this analysis. The flow
(\ref{KN:uT2}) turns into the classical symmetry $u_\tau=u^2_yv_y/h=u_y$ after
multiplying the right hand side by $c$ and setting $c=0$.

It may be not clear in the above exposition, wherefrom the hyperbolic system
(\ref{KN:uxy}), (\ref{KN:uxvx}) appears, especially if one does not wish to
employ the fact that (\ref{KN:uxvx}) defines the $x$-part of the B\"acklund
transformation for (\ref{KN}). Actually, any guess is not needed. The search of
a pair (\ref{vxvy}) which is compatible with a given evolutionary equation is
just a matter of computation and if we start from the Krichever--Novikov
equation then it quickly leads us to system (\ref{KN:pair}). The only one step
which is not algorithmic here is the choice of a convenient gauge (\ref{newv}),
but in this example it is quite obvious from the symmetry arguments. When the
B\"acklund variables are chosen, we may forget about the second equation
(\ref{vxvy}) and search for more general equation of the form (\ref{H}) which
is compatible with the $t$-dynamics. This leads to equation (\ref{KN:uxy}) with
the additional parameter $c$.

%-------------------------------------------------------------------------------
\section{Conclusion}

In this paper we have considered several examples of third order hyperbolic
equations possessing higher evolutionary symmetries. These examples demonstrate
that equations of such type may acquire a first integral under a parametric
degeneration which is interpreted as a complementary hyperbolic equation
consistent with the original one. Such consistent pairs of equations are
interesting objects as themselves. Their study leads us to the notion of the
B\"acklund variables which provide convenient dynamical variables for the
equation. The example of the Krichever--Novikov equation suggests that
introducing such variables is a natural and reasonable step if we wish to
obtain a complete description of integrable hyperbolic equations. However, this
classification problem seems rather difficult and it hardly can be solved in
the near future. Therefore, the actual problems are the search of new examples,
further study of the associated structures and construction of explicit
solutions.

The discrete analogs of hyperbolic equations (\ref{uxy}) are lattice equations
of the form
\[
 h_{m,n}(u_{m,n},u_{m+1,n},u_{m,n+1},u_{m+1,n+1})=0
\]
(see e.g. \cite{Adler_Bobenko_Suris_2009}). Recall, that their theory is also
closely related with the B\"acklund transformations: first, such equations
define the nonlinear superposition principle for equations (\ref{uxy}) and KdV
type equations, second, they define the B\"acklund transformations for
differential-difference equations of the Volterra lattice type. Like in the
continuous case, some examples require consideration of more general types of
equations. In particular, it is interesting to extend the obtained results to
equations of the form
\[
 f(u_{m,n},u_{m+1,n},u_{m+2,n},u_{m,n+1},u_{m+1,n+1},u_{m+2,n+1})=0
\]
which play the role of discrete analogs of equations (\ref{uxxy}).

%-------------------------------------------------------------------------------
\section*{Acknowledgements}
\addcontentsline{toc}{section}{Acknowledgements}

This work was partially supported by grants NSh--5377.2012.2 and RFBR
10-01-00088, 11-01-12018.

%-------------------------------------------------------------------------------
\section*{Appendix}
\addcontentsline{toc}{section}{Appendix}
\label{s:app}

Here we bring a sample {\em Mathematica} \cite{Wolfram_2003} program which
allows to check the compatibility of the Schwarz--KdV equation
\[
 u_t=u_{xxx}-\frac{3u^2_{xx}}{2u_x}
\]
with hyperbolic equations of different types. The computation is performed in
the standard set of dynamical variables (an implementation of the B\"acklund
ones requires certain modifications). The partial derivative
$\partial^m_x\partial^n_y(u)$ is denoted \verb|u[m,n]|. First, the operators of
the total derivatives with respect to $x,y$ and $t$ are defined, as well as the
equation itself:
\begin{quote}
\begin{verbatim}
 vars[f_]:=Union[Cases[f,_u,{0,\[Infinity]}]]
 diff[f_]:=Apply[Plus,Map[D[f,#]dif[#]&,vars[f]]]
 dx[f_]:=D[f,x]+diff[f]/.dif[u[m_,n_]]:>u[m+1,n]
 dx[f_,n_]:=Nest[dx,f,n]
 dy[f_]:=D[f,y]+diff[f]/.dif[u[m_,n_]]:>u[m,n+1]
 dy[f_,n_]:=Nest[dy,f,n]
 dt[f_]:=diff[f]/.dif[u[m_,n_]]:>dx[dy[ut,n],m]
 ut=u[3,0]-3/2*u[2,0]^2/u[1,0];
\end{verbatim}
\end{quote}
In the next lines, the mixed derivatives are eliminated in virtue of second
order hyperbolic equation (\ref{Schwarz-KdV:uxy}) and its compatibility with
the flow $\partial_t$ is verified:
\begin{quote}
\begin{verbatim}
 Clear[u]
 u[n_,1]:=dx[uxy,n-1]/;n>0
 u[1,n_]:=dy[uxy,n-1]/;n>0
 uxy=2*u[0,0]*u[1,0]*u[0,1]/(u[0,0]^2+1);
 Together[dx[dy[ut]]-dt[uxy]]
\end{verbatim}
\end{quote}
Analogously, the following fragment handles third order equation
(\ref{Schwarz-KdV:uxxy}):
\begin{quote}
\begin{verbatim}
 Clear[u]
 u[n_,1]:=dx[uxxy,n-2]/;n>1
 u[2,n_]:=dy[uxxy,n-1]/;n>0
 uxxy=u[1,1]*u[2,0]/u[1,0]+
   (u[1,1]^2-c^2*u[1,0]^2)/(2*u[0,1])+u[0,1];
 Together[dx[dy[ut],2]-dt[uxxy]]
\end{verbatim}
\end{quote}
Finally, consistent pair (\ref{Schwarz-KdV:pair}) is checked below. The last
three lines verify consistency of the pair itself, that is identity
(\ref{consistency}), and the compatibility of each equation of the pair with
the flow $\partial_t$:
\begin{quote}
\begin{verbatim}
 Clear[u]
 u[n_,1]:=dx[uxxy,n-2]/;n>1
 u[1,n_]:=dy[uxyy,n-2]/;n>1
 uxxy=u[1,1]*u[2,0]/u[1,0]+u[1,1]^2/(2*u[0,1])+u[0,1];
 uxyy=u[1,1]*u[0,2]/u[0,1]+u[1,1]^2/(2*u[1,0])+u[1,0];
 Together[dx[uxyy]-dy[uxxy]]
 Together[dx[dy[ut],2]-dt[uxxy]]
 Together[dx[dy[ut,2]]-dt[uxyy]]
\end{verbatim}
\end{quote}

%-------------------------------------------------------------------------------
\phantomsection
\addcontentsline{toc}{section}{References}


\begin{thebibliography}{99}

\bibitem{Zhiber_Shabat_1979} A.V. Zhiber, A.B. Shabat.
 Klein--Gordon equations with a nontrivial group.
 {\em Dokl. Akad. Nauk SSSR \bf 247:5} (1979) 1103--1107.
% {\em Sov. Phys. Doklady \bf 24} (1979) 607--609.

\bibitem{Zhiber_Ibragimov_Shabat_1979}
 A.V. Zhiber, N.H. Ibragimov, A.B. Shabat. Liouville type equations.
 {\em Dokl. Akad. Nauk SSSR \bf 249:1} (1979) 26--29.

\bibitem{Zhiber_Shabat_1984} A.V. Zhiber, A.B. Shabat.
 The systems $u_x=p(u,v)$, $v_y=q(u,v)$ possessing symmetries.
 {\em Dokl. Akad. Nauk SSSR \bf 277:1} (1984) 29--33.
% Sov. Math. Dokl. 30 (1984) 23--26.

\bibitem{Zhiber_1995} A.V. Zhiber.
 Quasilinear hyperbolic equations with an infinite-dimensional
 symmetry algebra.
 {\em Russian Acad. Sci. Izv. Math. \bf 45:1} (1995)
 \href{http://dx.doi.org/10.1070/IM1995v045n01ABEH001634}{33--54}.

\bibitem{Sokolov_Zhiber_1995} V.V. Sokolov, A.V. Zhiber.
 On the Darboux integrable hyperbolic equations.
 {\em Phys. Lett. A \bf 208:4--6} (1995)
 \href{http://dx.doi.org/10.1016/0375-9601(95)00774-2}{303--308}.

\bibitem{Zhiber_Sokolov_2001} A.V. Zhiber, V.V. Sokolov.
 Exactly integrable hyperbolic equations of Liouville type.
 {\em Russ. Math. Surveys \bf 56:1} (2001)
 \href{http://dx.doi.org/10.1070/RM2001v056n01ABEH000357}{61--101}.

\bibitem{Mikhailov_Sokolov_2009} A.V. Mikhailov, V.V. Sokolov.
 Symmetries of differential equations and the problem of integrability.
 {\em In: Integrability, ed. A.V. Mikhailov.}
 {\em Lect. Notes Phys. \bf 767} (2009)
 \href{http://dx.doi.org/10.1007/978-3-540-88111-7_1}{19--88}.

\bibitem{Fuchssteiner_Fokas_1981} B. Fuchssteiner, A.S. Fokas.
 Symplectic structures, their B\"acklund transformations
 and hereditary symmetries. {\em Physica D \bf 4:1} (1981)
 \href{http://dx.doi.org/10.1016/0167-2789(81)90004-X}{47--66}.

\bibitem{Schiff_1998} J. Schiff.
 The Camassa--Holm equation: a loop group approach.
 {\em Physica D \bf 121:1--2} (1998)
 \href{http://dx.doi.org/10.1016/S0167-2789(98)00099-2}{24--43}.

\bibitem{Hone_1999} A.N.W. Hone.
 The associated Camassa--Holm equation and the KdV equation.
 {\em J. Phys. A \bf 32:27} (1999)
 \href{http://dx.doi.org/10.1088/0305-4470/32/27/103}{L307--314}.

\bibitem{Ivanov_2006} R.I. Ivanov. 
 Extended Camassa--Holm hierarchy and conserved quantities.
 {\em Z. Naturforsch. \bf 61a} (2006) 133--138.
 
\bibitem{Degasperis_Procesi} A. Degasperis, M. Procesi.
 Asymptotic integrability.
 {\em In: Symmetry and Perturbation Theory, Proc. 2nd Intl. Workshop,
 Rome, Italy, December 16--22, 1998, eds. A. Degasperis, G. Gaeta.}
 World Scientific, Singapore, 1999, pp. 23--37.

\bibitem{Degasperis_Hone_Holm} A. Degasperis, A.N.W. Hone, D.D. Holm.
 A new integrable equation with peakon solutions.
 {\em Theor. Math. Phys. \bf 133:2} (2002)
 \href{http://dx.doi.org/10.1023/A:1021186408422}{1463-1474}.

\bibitem{Courant_Hilbert} R. Courant, D. Hilbert. Methods of
 Mathematical Physics, vol. 2. Partial differential equations.
 New York: Wiley, 1989.

\bibitem{Meshkov_Sokolov_2011} A.G. Meshkov, V.V. Sokolov.
 Hyperbolic equations with third-order symmetries.
 {\em Theor. Math. Phys. \bf 166:1} (2011)
 \href{http://dx.doi.org/10.1007/s11232-011-0004-3}{43--57}.

\bibitem{Svinolupov_Sokolov_Yamilov}
 S.I. Svinolupov, V.V. Sokolov, R.I. Yamilov.
 B\"acklund transformations for integrable evolution equations.
 {\em Dokl. Akad. Nauk SSSR \bf 271:4} (1983) 802--805.
% {\em Sov. Math. Dokl. \bf 28} (1983) 165--168.

\bibitem{Bobenko_Schief_1999} A.I. Bobenko, W.K. Schief.
 Discrete indefinite affine spheres.
 {\em In: Discrete Integrable Geometry and Physics,
 eds. A. Bobenko, R. Seiler.}
 Oxford University Press, 1999, pp. 113--138.

\bibitem{Kaup_Newell_1978} D.J. Kaup, A.C. Newell.
 An exact solution for a derivative nonlinear Schr\"odinger equation.
 {\em J. Math. Phys. \bf 19:4} (1978)
 \href{http://dx.doi.org/10.1063/1.523737}{798--801}.

\bibitem{Ablowitz_Segur_1981} M.J. Ablowitz, H. Segur.
 Solitons and the Inverse Scattering Transform.
 Philadelphia: SIAM, 1981.

\bibitem{Shabat_Yamilov_1991} A.B. Shabat, R.I. Yamilov.
 Symmetries of nonlinear chains.
 {\em Leningrad Math. J. \bf 2:2} (1991) 377--399.

\bibitem{Adler_Shabat_2006} V.E. Adler, A.B. Shabat.
 Dressing chain for the acoustic spectral problem.
 {\em Theor. Math. Phys. \bf 149:1} (2006)
 \href{http://dx.doi.org/10.1007/s11232-006-0121-6}{1324--1337}.

\bibitem{Adler_1998} V.E. Adler.
 B\"acklund transformation for the Krichever--Novikov equation.
 {\em Int. Math. Res. Notices \bf 1998:1} (1998)
 \href{http://dx.doi.org/10.1155/S1073792898000014}{1--4}.

\bibitem{Adler_Bobenko_Suris_2009} V.E. Adler, A.I. Bobenko, Yu.B. Suris.
 Discrete nonlinear hyperbolic equations. Classification of integrable cases.
 {\em Funct. Anal. and Appl. \bf 43:1} (2009)
 \href{http://dx.doi.org/10.1007/s10688-009-0002-5}{3--17}.

\bibitem{Wolfram_2003} S. Wolfram.
 The Mathematica Book, 5th ed., Wolfram Media, 2003.

\end{thebibliography}
\end{document}